\renewcommand{\harvardand}
\newtheorem{mydef}{Definition}
\newcommand{\spa}{\mathrm{SP}}
\newcommand{\pr}{\mathrm{Pr}}
\newcommand{\var}{\mathrm{Var}}
\newtheorem{theorem}{Theorem}[section]
\newenvironment{remark}[1][Remark]{\begin{trivlist}
\item[\hskip \labelsep {\bfseries #1}]}{\end{trivlist}}
\begin{document}

\title{Two Sample Order Free Trend Inference with an Application in Plant Physiology}

\author{Yishi Wang \\
Department of Mathematics and Statistics\\ University of North Carolina Wilmington, Wilmington, NC 28409  \\ 
email: \texttt{wangy@uncw.edu} 
\and
Ann E. Stapleton \\
Department of Biology and Marine Biology\\ University of North Carolina Wilmington,Wilmington, NC 28409  \\ 
email: \texttt{stapletona@uncw.edu}
\and 
Cuixian Chen \\
Department of Mathematics and Statistics\\ University of North Carolina Wilmington,Wilmington, NC 28409  \\ 
email: \texttt{chenc@uncw.edu}} 
\maketitle


\newpage
\begin{center}
\textbf{Abstract}
\end{center}
This work is motivated by a biological experiment with a split-plot design, for the purpose of comparison of the changing patterns in seed weight from two treatment groups as subgroups in each of the two groups subject to increasing levels of stress. We formalize the question into a nonparametric two sample comparison problem for changes among the sub samples, which was analyzed using U-statistics. Zero inflated value were also considered in the construction of the U-statistics.  The U-statistics were then used in a Chi-square type test statistics framework for hypothesis testing. Bootstrapped p-values were obtained through simulated samples. It was proven that the distribution of the simulated sample can be independent provided the observed samples have certain summary statistics. Simulation results suggest that the test is consistent.
\vspace*{.3in}

\noindent\textsc{Keywords}: {Sub-sample trend comparison, order free inference,zero inflated value, U-statistics}

\newpage

\section{Introduction}
Biological systems have, as a defining feature, regulation of their processes in ways that generate peaks and dips in measured values over time and space \cite{campbell_integrating_2014}.  Perturbation of biological systems outside the tolerance limits can generate declining growth or activity, and such an out-of-tolerance stress response is also typically non-linear, with thresholds bounding sloped regions.  Common examples of biological stress responses that generate declining measured values include increasing levels of disease agents, toxin exposure \cite{hodgson_textbook_2011} and increasing levels of nutrient deprivation.  In some cases, stress-induced declines can be modified by treatments, such as drugs that counteract toxins or that increase pathogen resistance.  
	Nutrient and crowding stress are especially important in crop growth, with drought being the most economically important limit to crop yield world-wide \cite{lobell_influence_2012}.  Fertilization is frequently used to relieve nitrogen deficiency in grain crops, though nitrogen supplementation comes with both economic cost and side-effects from runoff \cite{hirel_challenge_2007}.  When stresses are applied in combination, as in typical crop growth in fields, the complexity of agronomic and genetic improvement is multiplied.  
	Better understanding of the biological mechanisms that integrate different stress effects during plant growth would allow us to design more focused, less combinatorially complex crop improvement strategies \cite{hammer_models_2006}.  Hormones are common integrators in biological systems,and hormones - by definition - have multiplier effects and control multiple downstream physiological actions \cite{campbell_integrating_2014}.  We designed an experiment to test the effect of plant hormones and hormone-perturbing chemicals on the relative response to environmental stress in maize, a key crop and plant genetic model organism.  In this experiment, as in many biological experiments, the stress effects were applied across the full range from the unstressed normal control to complete lack of growth.  The statistical consequences of this design include a biased zero-inflated data distribution, with most of the zeros in the most severe stress.  In addition, to address our question about possible stress amelioration by chemical treatments, we needed to compare the pattern of growth at each measured point and determine whether growth changed. As this is a real data motivating example, missing values were accommodated in our new analysis.  

In the experiment, multiple maize inbreds were exposed to all combinations of the following stressors: drought, nitrogen, and density stress. Plants were grown in an experimental plot divided into eight sections, and each of the sections received a combination of between zero and three of the stresses previously mentioned, so that all possible stress combinations were included.  More details about the experiment can be found in \cite{stutts_hormones_2014}.

The following boxplot shows the seed weight response changes as the environment(ENV) levels vary, for chemical treatments " PAC" and "PACGA" in genotype Mo298. The EVN levels can be ordered according to total stress levels across the individual and combination stresses.  One question to ask is whether these two boxplots exhibit the same pattern in the change of seed weight as the ENV level changes.

\begin{figure}[h]
\center
\includegraphics[width=12cm]{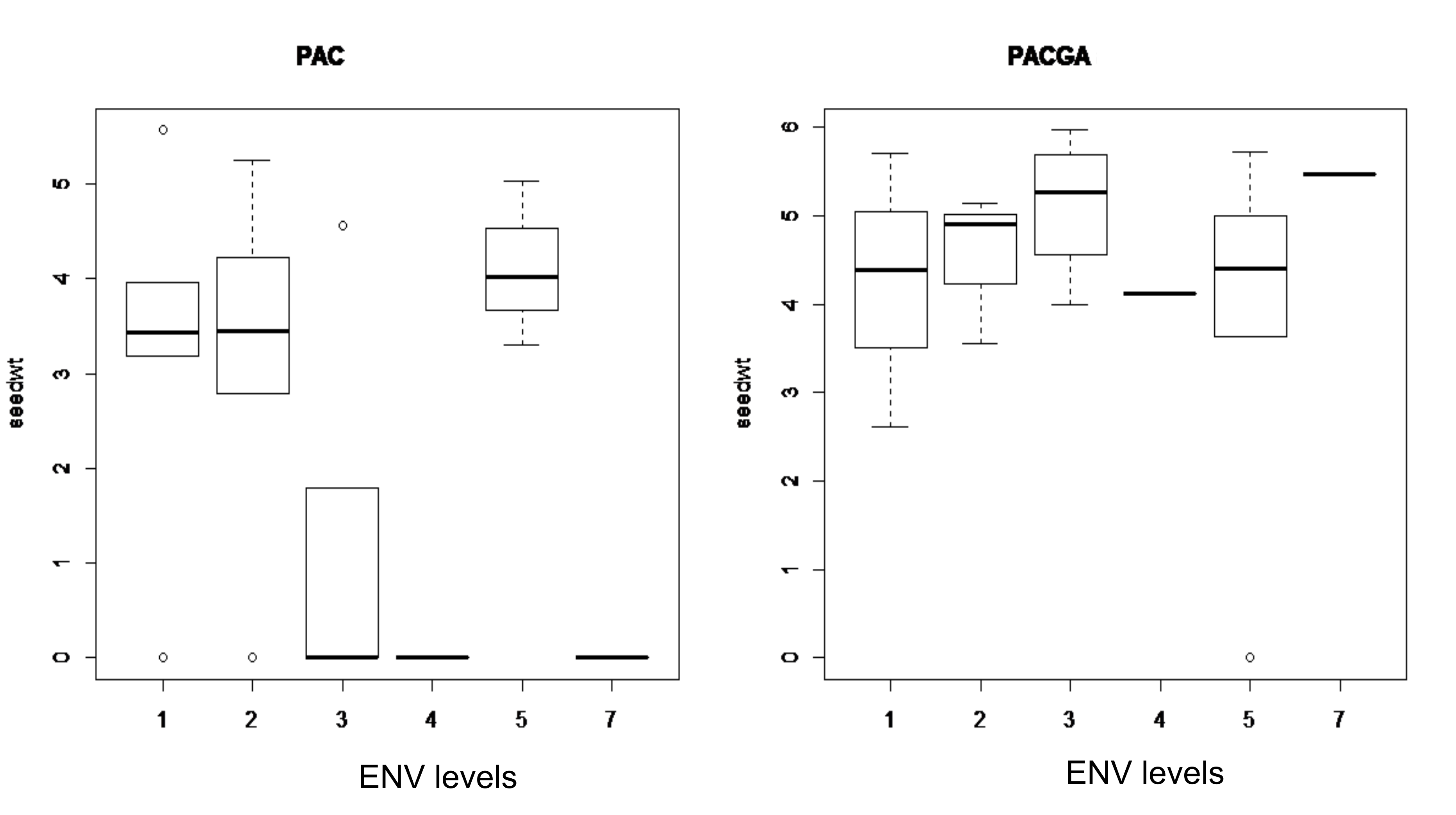}
\caption{Boxplot of sub samples for PAC and PACGA from Line Mo298}\label{exp1.fig}
\end{figure}

Motivated by the above example, our focus in this work is to develop a methodology for trend comparisons among sub-samples for the two sample situation. To be specific within this context, we introduce the following notation.

First, we assume that there are two populations that are subject to two different treatments, denoted by $\mathbb{A}$ and $\mathbb{B}$. For population $\mathbb{A}$, there are $K+1$ sub populations with inherent orders. Let $X_{ij}$ be the $j$-th random replicates from the $i$-th sub population, for $i=1,\cdots,K+1$ and $j=1,\cdots,n_{i}$, where $n_{i}$'s are the number of replications in each subgroup. Let $F_i(\theta_{x,i})$ be the cumulative distribution function (cdf) of $X_{ij}$, where  $\theta_{x,i}$'s are the parameters.  For population $\mathbb{B}$, there are also $K+1$ sub populations with inherent orders. Let $Y_{ij}$ be the $j$-th random replicates from the $i$-th sub population, for $i=1,\cdots,K+1$ and $j=1,\cdots,m_{i}$, where $m_{i}$'s are the number of replications in each subgroup. Let $G_i(\theta_{y,i})$ be the cumulative distribution function (cdf) of $Y_{ij}$, where  $\theta_{y,i}$'s are the parameters. The trend comparison problem among two samples could possibly be modeled to test whether the pattern among the $\theta_{x,i}$'s is the same with the one among the $\theta_{y,i}$'s.

Based on the above notation, many statistical models can be viewed as a one sample statistics problem. For monotone trend detection, the Mann-Kendall test \cite{mann1945nonparametric} \cite{kendall1948rank} \cite{richard1987statistical} was proposed to examine whether such trend existed for one variable of interest, with $n_i=1$ (without sub samples structure). Another example of non-parametric trend comparison \cite{hirsch1984nonparametric}  used rank transformation, which is computationally tractable but does not relax the assumption of equal variances. Trend comparison statistical approaches often rely primarily on distributional assumptions.  Linear fits are often used in small time series settings, with longer time series data allowing applications of additional methods such as shifted correlations and cycles \cite{brockwell2006introduction}. 

Analysis of Variance is an alternative and widely applied approach; under the current notation its hypothesis becomes 
\begin{align}
H_0: \theta_{x,1}=\cdots=\theta_{x,K+1}\text{ v.s. }H_1: \theta_{x,i}\neq\theta_{x,j} \text{ for } i\neq j\in \{1,2,\cdots,K+1\},\nonumber
\end{align}
where $\theta_{x,i}$ is the $i$th sub population average from population $\mathbb{A}$. Along these lines, order-restricted inference, which appeared first in \cite{bartholomew1959test}, \cite{bartholomew1959test2}, was among the earliest works with an alternative hypothesis on the monotonic pattern of the subgroup parameters:
\begin{align}
H_1: \theta_{x,1}\leq\cdots\leq\theta_{x,K+1}\text{ or } \theta_{x,1}\geq\cdots\geq\theta_{x,K+1},\nonumber
\end{align}
with at least one strict inequality.

Among recent work in order restricted inference, a broader class of order restrictions were investigated \cite{robertson1988order}. For binary type response variables, a test statistic was proposed for testing equality of multiple independent variables against three types of ordered restrictions\cite{peddada2001tests}. Methodologies were also developed for time-course or dose-response profiles with gene expression data \cite{peddada2003gene}\cite{peddada2010analysis}\cite{simmons2007order}. Another related work is \cite{deng2000detecting}, in which a test based on Mann-Whitney statistics was proposed to test hormesis of dose response relationship in one-way layout design. Multiple independent random sub samples were involved for testing against monotone or hormesis orders among the sub population averages.

All the literature that we have found has a focus on patterns of parameters from sub populations of one population. It is therefore appealing to investigate the comparison of parameters' patterns (from sub populations) of two populations. Other considerations include relatively small sub-sample sizes (~5) and zero inflated values. Small sample size makes any distributional assumption hard to verify. Non-parametric analysis allows us to avoid the assumption that the experimental samples were chosen from a specified distribution; robust and resampling-based non-parametric tests also free us from the assumption of equal variances \cite{quinn2002experimental}. Non-parametric tests would be especially useful for the typical small-n short series commonly utilized in biological experimental designs - if such tests were available for trends.Therefore we will consider nonparametric approach in this work. 

\section{Methodology}

To investigate the trend among sub-samples, we need to first define what we mean by trend or sequential pattern. 

\begin{mydef}\label{def1}
The sequential pattern of a sequence of continuous numbers, $\alpha_1,\cdots,\alpha_k$ is defined as $$\spa(\alpha_1,\cdots,\alpha_k)\triangleq \{1(\alpha_1<\alpha_2),1(\alpha_2<\alpha_3),\dots,1(\alpha_{k-1}<\alpha_k)\},$$ 
where $1(\cdot)$ is the indicator function. 
\end{mydef}

\begin{mydef}\label{def2}
The sequential patterns of two sequence of continuous numbers, $\alpha_1,\cdots,\alpha_k$ and $\beta_1,\cdots,\beta_k$ are equivalent if and only if $\spa(\alpha_1,\cdots,\alpha_k)=\spa(\beta_1,\cdots,\beta_k)$.
\end{mydef}

We defined $p_{x,i}\triangleq \pr(X_{i1}<X_{(i+1)1})$ and $p_{y,i}\triangleq \pr(Y_{i1}<Y_{(i+1)1})$, then the hypothesis testing question is 
\begin{align}\label{hyp}
H_0&:(p_{x,1},p_{x,2},\cdots,p_{x,K})=(p_{y,1},p_{y,2},\cdots,p_{y,K}),\nonumber \\ 
\text{ v.s. } H_1&:p_{x,i}\neq p_{y,i} \text{ for at least one }i\in \{1,\cdots, K\}.
\end{align} 

When $K=1$, the above hypotheses appears to be the same as the usual two-sample proportion test. However in the context of this paper, the proportion $p_{x,1}$ and $p_{y,1}$ are each estimated from two independent samples, and thus we can not use the regular two sample proportion test to evaluate the hypotheses. 
Assume that we have sample observations $x_{ij}$ for $i=1,\cdots,K+1$ and $j=1,\cdots,n_{i}$, and $y_{ij}$ for $i=1,\cdots,K+1$ and $j=1,\cdots,m_{i}$. 
In order to estimate $p_{x,1}$, a convenient estimator would be  
\begin{align}\label{prop.est0}
\hat{p}_{0,x,1}\triangleq\frac{1}{n_0}\sum_{j=1}^{n_0}1(x_{1j}<x_{2j}),
\end{align} 
if $n_0=n_1=n_2$.

However, the sub sample sizes from two different sub populations may be different, we therefore propose the following estimator:
\begin{align}\label{prop.est}
\hat{p}_{x,l}\triangleq\frac{1}{n_ln_{l+1}}\sum_{i=1}^{n_l}\sum_{j=1}^{n_{l+1}}1(x_{li}<x_{(l+1)j}),
\end{align} 
for $l=1,\cdots,K$. This estimator is based on Mann-Whitney $U-$ statistics \cite{mann1947test}. The strength of this later estimator is not just limited to its flexibility  to sub-sample size issue. It can be verified that E$(\hat{p}_{x,1})=p_{x,1}$ and  $\var(\hat{p}_{x,1})=\frac{p_{x,1}(1-p_{x,1})}{n_1n_2}$, which is  more efficient than the estimator in (\ref{prop.est0}).

Similarly, the estimator for $p_{y,l}$ is:
\begin{align}\label{prop.est.y}
\hat{p}_{y,l}\triangleq\frac{1}{m_lm_{l+1}}\sum_{i=1}^{m_l}\sum_{j=1}^{m_{l+1}}1(y_{li}<y_{{(l+1)}j}).
\end{align}

Following the idea of the estimator in (\ref{prop.est}), we propose to use the $2\times 2K$ table below to summarize information for sub-sample trend comparison:

\begin{table}[h]
\centering
\begin{tabular}{ |c|c|c|c|c| } 
 \hline
  $O_{x,1}$ &  $n_1n_2-O_{x,1}$ & $\cdots$  &$O_{x,K}$ &$n_{K}n_{K+1}-O_{x,K}$  \\ 
 \hline
 $O_{y,1}$ &  $m_1m_2-O_{y,1}$ & $\cdots$  &$O_{y,K}$ &$m_{K}m_{K+1}-O_{x,K}$  \\
 \hline
\end{tabular}
\caption{Frequency Distribution Table}\label{tab:fdt}
\end{table}
\noindent where $O_{x,l}\triangleq\displaystyle\sum_{i=1}^{n_l}\sum_{j=1}^{n_{l+1}}1(x_{li}<x_{(l+1)j})$  and $O_{y,l}\triangleq\displaystyle\sum_{i=1}^{m_l}\sum_{j=1}^{m_{l+1}}1(y_{li}<y_{(l+1)j})$, for $l\in{1,\cdots,K}$.

The first two columns are for the purpose of comparing $p_{x,1}$ and $p_{y,1}$
 with row sums equal to $n_1n_2$ for the first row and $m_1m_2$ for the second row. Therefore it follows that the sum of the entire first row is $n_1n_2+n_2n_3+\cdots+n_{K}n_{K+1}$  and the sum of the entire second row is $m_1m_2+m_2m_3+\cdots+m_{K}m_{K+1}$. Let $N_{tot}\triangleq\sum_{l=1}^K(n_ln_{l+1}+m_lm_{l+1})$ and $R_x\triangleq \frac{\sum_{l=1}^K(n_ln_{l+1})}{N_{tot}}$, as these quantities will be helpful to develop the expected frequency table below.
 
\begin{theorem}\label{exp.table}
With Table\ref{tab:fdt} and under the null hypothesis in (\ref{hyp}), the expected frequency table follows:

\begin{table}[h]
\centering
\begin{tabular}{ |c|c|c|c|c| } 
 \hline
  $E_{x,1}$ &  $R_x(n_1n_{2}+m_1m_{2})-E_{x,1}$ & $\cdots$  &$E_{x,K}$ &$R_x(n_Kn_{K+1}+m_Km_{K+1})-E_{x,K}$  \\ 
 \hline
 $E_{y,1}$ &  $(1-R_x)(n_1n_{2}+m_1m_{2})-E_{y,1}$ & $\cdots$  &$E_{y,K}$  &$(1-R_x)(n_Kn_{K+1}+m_Km_{K+1})-E_{y,K}$  \\
 \hline
\end{tabular}
\caption{Expected Frequency Table under $H_0$}\label{tab:exp}
\end{table}
\noindent where $E_{x,l}\triangleq(O_{x,1}+O_{y,1})R_x$ , and $E_{y,l}\triangleq(O_{x,1}+O_{y,1})(1-R_x)$.
\end{theorem}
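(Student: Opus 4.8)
The plan is to recognize the assertion as the standard ``row-total times column-total over grand-total'' expected-frequency rule for a two-row contingency table, and then to carry out the substitution and simplification that turns that rule into the entries of Table~\ref{tab:exp}. (I read the definitions $E_{x,l}\triangleq(O_{x,1}+O_{y,1})R_x$ and $E_{y,l}\triangleq(O_{x,1}+O_{y,1})(1-R_x)$ as carrying a general index $l$ rather than the literal subscript $1$, i.e.\ $E_{x,l}=R_x(O_{x,l}+O_{y,l})$ and $E_{y,l}=(1-R_x)(O_{x,l}+O_{y,l})$; otherwise the entries in the later columns of the table would not match the pattern.)

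First I would fix the marginals of Table~\ref{tab:fdt}. Reading the $2\times 2K$ array as two rows (treatments $\mathbb{A}$ and $\mathbb{B}$) and $2K$ columns (a ``success'' and a ``failure'' column for each $l=1,\dots,K$), the row totals are $\sum_{l=1}^K n_ln_{l+1}$ and $\sum_{l=1}^K m_lm_{l+1}$, so by the definitions of $N_{tot}$ and $R_x$ these equal $R_xN_{tot}$ and $(1-R_x)N_{tot}$. The column total of the $l$-th success column is $O_{x,l}+O_{y,l}$, that of the $l$-th failure column is $(n_ln_{l+1}+m_lm_{l+1})-(O_{x,l}+O_{y,l})$, and the grand total is $N_{tot}$.

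Next I would supply the step where the null hypothesis enters. Under $H_0$ we have $p_{x,l}=p_{y,l}$ for every $l$, so within each column-pair the two treatments share a common success probability; this is the condition under which it is natural to pool across the rows and take the independence-model expected count of cell $(i,j)$, namely $E_{ij}=(\text{row }i\text{ total})(\text{col }j\text{ total})/N_{tot}$. Substituting the marginals from the previous step gives, for the $l$-th success column, $E_{x,l}=R_x(O_{x,l}+O_{y,l})$ in row one and $E_{y,l}=(1-R_x)(O_{x,l}+O_{y,l})$ in row two; for the $l$-th failure column the same rule gives $R_x\bigl[(n_ln_{l+1}+m_lm_{l+1})-(O_{x,l}+O_{y,l})\bigr]=R_x(n_ln_{l+1}+m_lm_{l+1})-E_{x,l}$ and, symmetrically, $(1-R_x)(n_ln_{l+1}+m_lm_{l+1})-E_{y,l}$. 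These are exactly the entries of Table~\ref{tab:exp}, and a quick consistency check (each expected column reproduces its observed column total, while the expected row sums reproduce $R_xN_{tot}$ and $(1-R_x)N_{tot}$) confirms the algebra.

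I expect the genuine work to lie not in the algebra but in justifying the pooling step cleanly. Two points deserve care. First, each $O_{x,l}$ is a Mann--Whitney count, a sum of dependent indicators rather than a genuine binomial, so I would either lean on $\mathrm{E}(\hat p_{x,l})=p_{x,l}$ (established just after~(\ref{prop.est})) to argue at the level of expectations, or state explicitly that these contingency-table frequencies serve as the reference counts for the Chi-square statistic rather than as exact binomial expectations. Second, the formula applies the \emph{global} fraction $R_x$ to every column rather than the per-pair fraction $n_ln_{l+1}/(n_ln_{l+1}+m_lm_{l+1})$; I would make explicit that this is the pooled independence estimate over the whole $2\times 2K$ table, so that the expected table preserves the overall row totals and each column total but not the within-pair design totals. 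Flagging this pooling convention, and reconciling it with the hypothesis $H_0$ stated per pair, is the main obstacle to a fully rigorous statement.
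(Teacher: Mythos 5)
Your proposal is correct and follows essentially the same route as the paper's own proof: both apply the independence-model rule (row total times column total over grand total) with the same marginals --- row totals $R_xN_{tot}$ and $(1-R_x)N_{tot}$, column totals $O_{x,l}+O_{y,l}$ and $(n_ln_{l+1}+m_lm_{l+1})-(O_{x,l}+O_{y,l})$ --- and the same algebraic simplification for the ``failure'' columns. Your decision to read the definitions with a general index $l$ rather than the literal subscript $1$ is also the intended one, since the paper's proof computes $E_{x,l}$ columnwise exactly as you do, so the subscript $1$ in the theorem's displayed definitions is evidently a typo.
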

\begin{proof}
The row sum of the first row of Table\ref{tab:fdt} is $\sum_{l=1}^K(n_ln_{l+1})$, and the row sum of the second row is $\sum_{l=1}^K(m_lm_{l+1})$. The sum of the first column is $O_{x,1}+O_{y,1}$. By the definition of independence, we have expected count for the first cell of the first row as

\begin{align}\label{thm1.1}
E_{x,1}=R_x\times\frac{(O_{x,1}+O_{y,1})}{N_{tot}}\times N_{tot}=(O_{x,1}+O_{y,1})R_x.
\end{align}

Similarly, we can verify that  $E_{y,1}=(O_{x,1}+O_{y,1})(1-R_x)$.

As to the second cell of the first row, noting that the sum of the second column is $(n_1n_{2}+m_1m_{2}-O_{x,1}-O_{y,1})$, we have the expected value of the that cell to be
\begin{align}\label{thm1.2}
R_x\times\frac{(n_1n_{2}+m_1m_{2}-O_{x,1}-O_{y,1})}{N_{tot}}\times N_{tot}=R_x\times(n_1n_{2}+m_1m_{2})-E_{x,1}.
\end{align}
Similarly, we can verify that  the expected value of the second cell of second row is $(1-R_x)(n_1n_{2}+m_1m_{2})-E_{y,1}$. The proof for the rest of the cells can be verified similarly.

\end{proof}

Following the idea of a chi-square test, by comparing Table\ref{tab:fdt} with Table\ref{tab:exp}, the following statistic is constructed to measure the discrepancy between $\hat{p}_{x,l}$ and $\hat{p}_{y,l}$:

\begin{align}\label{test}
M\triangleq\sum_{l=1}^{K}&\left(\frac{(O_{x,l}-E_{x,l})^2}{E_{x,l}}+\frac{(n_ln_{l+1}-O_{x,l}-R_x(n_ln_{l+1}+m_lm_{l+1})+E_{x,l})^2}{(R_x(n_ln_{l+1}+m_lm_{l+1})-E_{x,l})}\right)\nonumber\\
+\sum_{l=1}^{K}&\left(\frac{(O_{y,l}-E_{y,l})^2}{E_{y,l}}+\frac{(m_lm_{l+1}-O_{y,l}-(1-R_x)(n_ln_{l+1}+m_lm_{l+1})+E_{y,l})^2}{\left((1-R_x)(n_ln_{l+1}+m_lm_{l+1})-E_{y,l}\right)}\right).
\end{align} 
\begin{remark}[Remark 1:] The denominators $E_{x,l}$ in (\ref{test}) can be zero if and only if the corresponding observed counts are zero, for both $x$ and $y$. $E_{x,l}$ as an example, from (\ref{thm1.1}), is zero if and only if both $O_{x,l}$ and $O_{y,l}$ are zeroes. When $R_x(n_ln_{l+1}+m_lm_{l+1})-E_{x,l}=0$, by (\ref{thm1.1}), we have $(n_ln_{l+1}+m_lm_{l+1})=(O_{x,l}+O_{y,l})$, which is equivalent with $n_ln_{l+1}=O_{x,l}$ and $m_lm_{l+1}=O_{y,l}$, since $n_ln_{l+1}\geq O_{x,l}$ and $m_lm_{l+1}\geq O_{y,l}$. Thus its corresponding observing cell $n_ln_{l+1}-O_{x,l}$ is zero. In either of the two cases (when $O_{x,l}= O_{y,l}=0$; and when $n_ln_{l+1}=O_{x,l}$ and $m_lm_{l+1}=O_{y,l}$), we have $\frac{0}{0}$. Both cases also indicate that the cell counts of one column are zero, which means that for both samples, each observation in one sub-sample is always less than each observation in the other sub-sample. Since the two sample patterns agree, it is natural to define  $\frac{0}{0}\triangleq 0$ when such fractions exist for the test statistic $M$.
\end{remark}

\begin{remark}[Remark 2:] In the case of comparing zero inflated values, say $u$ from one sub-sample and $v$ from another sub-sample, we may replace the indicator function $1(\cdot)$ in $M$, by using the following function:
\begin{align}\label{censor}
f(u,v)=X_B,
\end{align}
\noindent where $X_B$ follows Bernoulli distribution with $p=\frac{1}{2}$.
\end{remark}


\section{Resampling Scheme for Hypothesis Testing and Power Analysis}
One way of finding critical values for test statistics $M$ in (\ref{test}) is through developing the asymptotic distribution. This approach however may require relatively large sub-sample sizes $n_l$ and $m_l$. Hence we propose a pseudo approach using resampling.

Resampling methods have been considered as an effective approach to simulate critical values as well as powers of tests. A comprehensive review on the subject can be found in \cite{efron1994introduction} and \cite{davison1997bootstrap}. Traditional resampling methods resample or relabel observations from the obtained samples, with or without replacement. Bootstrap methods were adopted in order restricted inference  \cite{peddada2001tests}. However, such an approach is not feasible in this context, because the parameters in the null and alternatives hypotheses could not be conveniently related to the sub sample observations. 

Specifically, the parameters we have are: $p_{x,1},p_{x,2},\cdots,p_{x,K}$ and $p_{y,1},p_{y,2},\cdots,p_{y,K}$. Under the null hypothesis in (\ref{hyp}), we have $(p_{x,1},p_{x,2},\cdots,p_{x,K})=(p_{y,1},p_{y,2},\cdots,p_{y,K})$. Define
\begin{align}\label{pl}
p_l\triangleq p_{x,l}=p_{y,l}.
\end{align}
Take $p_{x,1}$ and $p_{y,1}$ as an example, after obtaining sample estimators $\hat{p}_{x,1}$ and $\hat{p}_{y,1}$ following (\ref{prop.est}) and (\ref{prop.est.y}), an unbiased estimator for $p_{x,1}=p_{y,1}$ under the null hypothesis in (\ref{hyp}) is:
\begin{align}\label{cbnest.1}
\hat{p}_1\triangleq \frac{n_1n_2\hat{p}_{x,1}+m_1m_2\hat{p}_{y,1}}{n_1n_2+m_1m_2}.
\end{align}
It is not an easy task to resample or relabel observations from $\{x_{1i},x_{2j},y_{1k},y_{2l} \}_{i,j,k,l=1}^{m_1,m_2,n_1,n_2}$, such that with $\hat{p}_1$ as defined in (\ref{cbnest.1})
\begin{align}\label{bootpr.1}
\pr(X_{1,i}^*<X_{2,j}^*)=\pr(Y_{1,k}^*<Y_{2,l}^*)=\hat{p}_1,    
\end{align}
in which the sign $^*$ indicates that the random variable is bootstrapped from the sample observations.

Instead of resampling or relabeling observations from the existing observations, we propose to randomly generate $x_{li}^*$'s and $y_{lj}^*$'s from underline distributions $F^*_l(\eta_l)$, where $\eta_l$ is the parameter, such that for i.i.d random variables $X_{l}^*\sim F^*_l(\eta_l)$ and $Y_{l}^*\sim F^*_l(\eta_l)$,
\begin{align}\label{bootpr}
\pr(X_{l}^*<X_{l+1}^*)=\pr(Y_{l}^*<Y_{l+1}^*)=\hat{p}_l,    
\end{align}
for $l=1,\cdots,K$, where
\begin{align}\label{cbnest}
\hat{p}_l\triangleq \frac{n_ln_{l+1}\hat{p}_{x,l}+m_lm_{l+1}\hat{p}_{y,l}}{n_ln_{l+1}+m_lm_{l+1}}.
\end{align}

\begin{theorem}\label{thm3.1}
Under the null hypothesis in (\ref{hyp}), given $p_l$ defined in (\ref{pl}),  the distribution of the test statistics $M$ in (\ref{test}) is independent of  $F_l(\theta_{x,l} )$ and $G_l(\theta_{y,l} )$.
\end{theorem}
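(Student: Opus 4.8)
The plan is to reduce the statement to a fact about the joint law of the comparison counts, and then to observe that the underlying distributions enter the construction only through the numbers $p_l$. First I would note that the test statistic $M$ in (\ref{test}) is a fixed measurable function of the count vector $(O_{x,1},\dots,O_{x,K},O_{y,1},\dots,O_{y,K})$ together with the design constants $n_l,m_l$ and $R_x$: by Theorem~\ref{exp.table} every $E_{x,l}$ and $E_{y,l}$ is itself a function of $O_{x,l}+O_{y,l}$ and $R_x$, the remaining cell contributions are functions of the same counts and of the fixed products $n_ln_{l+1}$, $m_lm_{l+1}$, and any degenerate cell is resolved by the convention $\tfrac{0}{0}:=0$ from Remark~1. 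Hence the distribution of $M$ is completely determined by the joint distribution of the count vector and the (non-random) sample sizes, and it suffices to show that this joint distribution does not depend on $F_l(\theta_{x,l})$ or $G_l(\theta_{y,l})$ once the $p_l$ are fixed.

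Next I would exploit the structure of the generating scheme in (\ref{bootpr})--(\ref{cbnest}). The pseudo-observations $x_{li}^*$ and $y_{lj}^*$ are produced, level by level, as independent draws from calibrated laws $F_l^*(\eta_l)$, where each $\eta_l$ is chosen solely so that $\pr(X_l^*<X_{l+1}^*)=\pr(Y_l^*<Y_{l+1}^*)=\hat p_l$. Each count $O_{x,l}=\sum_{i,j}1(x_{li}^*<x_{(l+1)j}^*)$ is then a functional of the pair of generating laws $(F_l^*(\eta_l),F_{l+1}^*(\eta_{l+1}))$ and of $(n_l,n_{l+1})$ only, with the analogous statement for $O_{y,l}$, and the full joint law of the count vector is a functional of the family $\{F_l^*(\eta_l)\}_{l}$ and of the sizes. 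Since the data-generating distributions $F_l,G_l$ enter the construction only through the calibration $p_l\mapsto\eta_l$, fixing $(p_1,\dots,p_K)$ fixes the calibrated laws and therefore the entire joint law of the counts. Combining this with the first step, the law of $M$ is independent of $F_l,G_l$ given $(p_1,\dots,p_K)$.

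The step I expect to be the main obstacle is controlling the dependence among the counts. The sample at any interior level $l$ with $2\le l\le K$ enters both $O_{\cdot,l-1}$ and $O_{\cdot,l}$, so the counts are not independent and the joint law cannot be obtained by multiplying marginals; I would handle this by working with the full generating law of $(x_{1\cdot}^*,\dots,x_{(K+1)\cdot}^*)$ and conditioning on the interior samples, observing that the induced dependence is again a functional of $\{F_l^*(\eta_l)\}$ alone. A related subtlety worth flagging explicitly is that the pairwise probability $p_l$ does not by itself pin down higher-order comparison probabilities such as $\pr(X_{l1}^*<X_{(l+1)1}^*,\,X_{l1}^*<X_{(l+1)2}^*)$; this is exactly why the assertion is invariance with respect to the data-generating $F_l,G_l$ rather than with respect to the resampling family, since what the argument actually uses is the calibrated laws $F_l^*(\eta_l)$, which are determined by $(p_l)$ within the chosen family.
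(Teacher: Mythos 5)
Your proposal proves a different statement from the one the theorem makes, and the difference is exactly where the mathematical content lies. The statistic $M$ in (\ref{test}) is computed from the \emph{observed} data $x_{ij}\sim F_i$, $y_{ij}\sim G_i$, so the counts $O_{x,l}=\sum_{i,j}1(x_{li}<x_{(l+1)j})$ are functionals of the actual $F_l$'s; the theorem asserts that, once the $p_l$ are fixed, the law of these counts --- hence of $M$ --- does not depend on which $F_l,G_l$ produced the data. Your second step silently substitutes the calibrated pseudo-samples $x^*_{li}\sim F^*_l(\eta_l)$ for the data and then observes that the generating mechanism touches $F_l,G_l$ only through $\hat p_l$. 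That establishes only that the simulated statistic $M^*$ of Step~6 in Table~\ref{tab:step} is, by construction, independent of the data-generating distributions given the $\hat p_l$ --- which is immediate and is not the theorem; worse, it is circular, since the whole point of Theorem~\ref{thm3.1} is to justify using the law of $M^*$, built from an arbitrarily chosen family satisfying (\ref{bootpr}), as a surrogate for the law of $M$ under the null in (\ref{hyp}). Your closing caveat in effect concedes the gap: you note that $p_l$ does not determine quantities such as $\pr(X_{l1}<X_{(l+1)1},\,X_{l1}<X_{(l+1)2})$, but these are exactly the quantities on which the law of $O_{x,l}$ depends as soon as $n_{l+1}\geq 2$, so retreating to ``invariance with respect to the resampling family'' changes the statement rather than proving it.

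The paper's proof, by contrast, operates on the data-level counts directly. It reduces to $K=1$, then derives the recursion (\ref{recur}) for $\pr_{n_1,n_2}(O_{x,1}=k)$ by deleting the largest observation of the pooled sample (asserted to come from the first sub-sample with probability $n_1/(n_1+n_2)$, in which case $O_{x,1}$ is unchanged, and otherwise $O_{x,1}$ drops by $n_1$), terminating in the binomial base cases (\ref{recur.basic}) that involve only $p_1$; since the recursion and its base cases mention only $p_1$ and the sample sizes, the law of $O_{x,1}$ given $p_1$ is distribution-free, and similarly for $O_{y,1}$. This is the argument your proposal would need and does not contain. To be fair, the two worries you flag are genuine rather than pedantic: the dependence of adjacent counts through a shared interior sub-sample is real (the paper dismisses it by appeal to ``the Chi-square structure of $M$''), and the fact that $p_1$ alone does not pin down higher-order comparison probabilities is precisely the pressure point of the paper's own recursion, whose deletion step uses the exchangeable probabilities $n_i/(n_1+n_2)$ and whose base case treats $t$ indicators sharing one common observation as binomial. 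But noticing the obstacle is not the same as overcoming it: as written, your argument never engages the law of $M$ itself, so it does not prove the stated theorem.
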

\begin{proof}
Due to the Chi-square structure of the test statistics $M$, it is sufficient to work with  the special case of $K=1$, which make the distribution of $M$ depending on $O_{x,1}$ and $O_{y,1}$. If it can be proven that given $p_1$, $O_{x,1}$ is independent of $F_1(\theta_{x,1} )$ and $F_2(\theta_{y,1} )$, the theorem follows.

If the two data sequence $\{x_{1i}\}_{i=1}^{n_1}$ and $\{x_{2j}\}_{j=1}^{n_2}$ are combined and ordered from the smallest to the largest, $O_{x,1}$ is essentially counting how many times $x_{1,i}$'s appear before $x_{2,j}$'s.

With $n_1$ observations in the first sub-sample and $n_2$ observations in the second sub-sample, $O_{x,1}$ can be calculated from a sample space that is made of $\frac{(n_1+n_2)!}{n_1!n_2!}$ ways of ordering the combined sample. The chance that the largest observation coming from the first sample is $\frac{n_1}{n_1+n_2}$ -- in which case, if we delete that largest observation, $O_{x,1}$ is intact; on the other hand, the chance that the largest observation coming from the second sample is $\frac{n_2}{n_1+n_2}$ -- in which case, if we delete that largest observation, $O_{x,1}$ is reduced to $O_{x,1}-n_1$, since we lost all the counts of the $n_1$ observations less than the deleted one. Let $\pr_{n_1,n_2}(O_{x,1}=k)$ represent the probability of $O_{x,1}=k$, which is non negative when $k$  is a non negative integer and zero elsewhere. Based on the discussions so far, given that the size of the two sub-samples are $n_1$ and $n_2$, we have
\begin{align}\label{recur}
\pr_{n_1,n_2}(O_{x,1}=k)=\frac{n_1}{n_1+n_2}\pr_{n_1-1,n_2}(O_{x,1}=k)+\frac{n_2}{n_1+n_2}\pr_{n_1,n_2-1}(O_{x,1}=k-n_1).
\end{align}
The significance of equation (\ref{recur}) is that $\pr_{n_1,n_2}(O_{x,1}=k)$ is equivalent with the addition of two probabilities, whose "effective" sub-sample sizes are reduced. This recurrence relationship in (\ref{recur}) depends on eventually on positive probabilities when one of the "effective" sub-sample sizes is one:
\begin{align}\label{recur.basic}
\pr_{1,t}(O_{x,1}=s)&=\pr_{t,1}(O_{x,1}=s)=\begin{pmatrix}t \\ s \end{pmatrix}{p_1}^s(1-p_1)^{t-s},
\end{align}
for $s \in \{0,1,\cdots,t\}$, where $t$ is a positive integer less than or equal to $n_1$ or $n_2$. Thus, it is proven that, given $p_1$, $\pr_{n_1,n_2}(O_{x,1}=k)$ in (\ref{recur}) is independent of $F_1(\theta_{x,1} )$ and $F_2(\theta_{x,1} )$. Similar discussions can be shown that, given $p_1$, $\pr_{m_1,m_2}(O_{y,1}=k)$ is independent of $G_1(\theta_{y,1} )$ and $G_2(\theta_{y,1} )$ The conclusion of the theorem follows.


\end{proof}

Theorem\ref{thm3.1} suggests that the distribution of $M$ only depends on $p_l$, which allows free choice of underlying distributions for $F$ and $G$, as long as  (\ref{bootpr}) is satisfied. The normal distribution family is considered in this work, because according to L\'evy characterization (see e.g. Theorem 20.2.A in \cite{loeve1977probability}), linear combinations of independent normal random variables follow a normal distribution.

Let $X_1\sim N(a,1)$ and $X_2\sim N(a+h,1)$ be independent random variables. Then $\pr(X_1<X_2)=\pr(Z<\frac{h}{\sqrt{2}})$, where $Z$ is a standard normal r.v. Therefore, $h=\sqrt{2}\times \Phi^{-1}(\pr(X_1<X_2))$, where $\Phi$ is the cdf of $Z$. Following this idea, in order to generate random samples $x_{li}^*$'s  following the relationship in (\ref{bootpr}), we may set 
\begin{align}\label{hl}
 h_1\triangleq 0, \text{ and } h_i\triangleq\sqrt{2} \Phi^{-1}(\hat{p}_{i-1}),\text{ for } i=2,\cdots, K+1,
\end{align}
and generate random samples $x_{li}^*$'s from $X_i^*\sim N(h_{i},1)$ for $i=1,\cdots,K+1$.
Under the null hypothesis, random samples $y_{lj}^*$'s should be generated in the same way as $x_{li}^*$'s.

The following table summarize all the steps it takes to conduct the hypothesis testing as in (\ref{hyp})

\begin{table}[h]
\centering
\begin{tabular}{ |p{1cm}|p{14.2cm}|} 
 \hline
   Step1 & With the current sample observations $x_{ij}$  for $i=1,\cdots, K+1$ and $j=1,\cdots, n_i$; and $y_{ij}$ for $i=1,\cdots, K+1$ and $j=1,\cdots, m_i$, find $\hat{p}_{x,l}$ as defined in (\ref{prop.est}) and $\hat{p}_{y,l}$ as defined in (\ref{prop.est.y}), for $l=1,\cdots, K$.\\
   Step2 & Calculate $\hat{p}_l$ as defined in ($\ref{cbnest}$) for $l=1,\cdots,K$.\\
   Step3 & Calculate $h_i$ as defined in ($\ref{hl}$) for $i=1,\cdots,K+1$.\\
  Step4 & Generate random samples $x_{ij}^*$'s from $N(h_{i},1)$ for $i=1,\cdots,K+1$ and $j=1,\cdots, n_i$ .\\
   Step5 & Generate random samples $y_{ij}^*$'s from $N(h_{i},1)$ for $i=1,\cdots,K+1$ and $j=1,\cdots, m_i$.\\
   Step6 & Compute $M^*$ based on $x_{ij}^*$'s and $y_{ij}^*$'s with the same formula as defined in (\ref{test}).\\
  Step7 & Repeat step 4 through step 6 $N_b$ many times and obtain a sequence $\{M^*_t\}_{t=1}^{N_b}$.\\
   Step8 & Order the sequence $\{M^*_t\}_{t=1}^{N_b}$ and find its (1-$\alpha$)th percentile, which is the critical value, denoted by $C_{\alpha}$. \\
   Step9 & Compare the test statistics $M$ from (\ref{test}) with $C_{\alpha}$. If $M\leq C_{\alpha}$, the null hypothesis will not be rejected. A bootstrapped p-value may also be obtained through determining the percentage of values in $\{M^*_t\}_{t=1}^{N_b}$ more than $M$. \\
\hline
\end{tabular}
\caption{A summary of the steps for bootstrapping the critical values with type I error $\alpha$.}\label{tab:step}
\end{table}


\section{Simulation Study}
In the first simulation study, the power of the test was studied with varying designs of frequency distribution tables and sample sizes. We assume that there are four sub-samples from each of the population $\mathbb{A}$ and of population $\mathbb{B}$. As shown in Table\ref{tab:pow}, the sub-tables in the third column are realizations of Table\ref{tab:fdt}:

\begin{table}[h]
\centering
\begin{tabular}{ |c|c|l|c|c| } 
 \hline
 ID & Sub-sample Sizes & Freq Distribution Table & p-value & $N_b$\\
 \hline
 1 & \begin{tabular}{@{}l@{}@{}}  $K=3$\\$n_1=n_2=n_3=n_4=5$\\  $m_1=m_2=m_3=m_4=5$  \end{tabular} & \begin{tabular}{ |c|c|c|c|c|c| } 
 \hline
  20 &  5 & 10  & 15 &20 &5  \\ 
 \hline
  15 &  10 & 15  & 10 &20 &5  \\ 
 \hline
\end{tabular} & 0.764 & $10^3$   \\ 
 \hline
    2 & \begin{tabular}{@{}l@{}@{}}  $K=3$\\$n_1=n_2=n_3=n_4=10$\\  $m_1=m_2=m_3=m_4=10$ \end{tabular} & \begin{tabular}{ |c|c|c|c|c|c| } 
 \hline
  80 &  20 & 40  & 60 &80 &20  \\ 
 \hline
  60 &  40 & 60  & 40 &80 &20  \\ 
 \hline
\end{tabular} & 0.449 & $10^3$   \\ 
\hline
     3 & \begin{tabular}{@{}l@{}@{}}  $K=3$\\$n_1=n_2=n_3=n_4=5$\\  $m_1=m_2=m_3=m_4=5$ \end{tabular} & \begin{tabular}{ |c|c|c|c|c|c| } 
 \hline
  18 &  7 & 12  & 13 &22 &3  \\ 
 \hline
  15 &  10 & 15  & 10 &20 &5  \\ 
 \hline
\end{tabular} & 0.921 & $10^3$   \\ 
\hline
     4 & \begin{tabular}{@{}l@{}@{}}  $K=3$\\$n_1=n_2=n_3=n_4=10$\\  $m_1=m_2=m_3=m_4=10$ \end{tabular} & \begin{tabular}{ |c|c|c|c|c|c| } 
 \hline
  72 &  28 & 48  & 52 &88 &12  \\ 
 \hline
  60 &  40 & 60  & 40 &80 &20  \\ 
 \hline
\end{tabular} & 0.747 & $10^3$   \\ 
 
 \hline
\end{tabular}
\caption{Examples of Power Simulation}\label{tab:pow}
\end{table}
For the first experiment in Table\ref{tab:pow}, the sub-sample sizes are uniformly equal to 5. An example Freq Distribution Table is then provided  following the structure of  Table\ref{tab:fdt}. It also indicates that 
\begin{align}\label{ps:exp1}
&\hat{p}_{x,1}=\frac{20}{25}=0.8\text{, }\hat{p}_{x,2}=\frac{10}{25}=0.4 \text{ and } \hat{p}_{x,3}=0.8; \nonumber\\
&\hat{p}_{y,1}=0.6\text{, }\hat{p}_{y,2}=0.6 \text{ and } \hat{p}_{y,3}=0.8, \nonumber\\
&\text{which implies that: } \hat{p}_{1}=0.7\text{, } \hat{p}_{2}=0.5 \text{ and } \hat{p}_{3}=0.8. 
\end{align}
following the steps in Table\ref{tab:step}, a simulation p-value of 0.764 was obtained with  $N_b=10^3$. Therefore with $\alpha=0.05$, it is unlikely to reject the null hypothesis as in (\ref{hyp}).

In the second experiment, we assume that there are ten observations in each of the four sub-samples from population $\mathbb{A}$, as well as from population $\mathbb{B}$. Compared with the first experiment, the sub-sample size in the second experiment is twice as big. The values in the corresponding Freq Distribution Table are exactly four times the values in the Freq Distribution Table values of the first experiment. Hence, $\hat{p}_{x,l}$, $\hat{p}_{x,l}$ and $\hat{p}_{l}$ are exactly the same as in (\ref{ps:exp1}) from the first experiments. With $N_b=10^3$, a p-value of 0.449 was obtained.

By comparing the p-values from the first two experiments in Table\ref{tab:pow}, it suggests that the rejection power of the test is getting bigger as the sub-sample sizes increase.

The third and the fourth experiments were designed using the same structure as the first two experiments in terms of sub-sample sizes and counts in the frequency distribution tables, and therefore they share the same $\hat{p}_l$'s.The same decreasing order of the p-values are observed, from 0.921 to 0.747 as the sub-sample sizes increase.

Another comparison is between the first and third experiments. Among the two experiments, $O_{y,i}$'s are the same, while $O_{x,i}$'s from the third experiment are more similar with $O_{y,i}$'s than the  $O_{x,i}$'s from the first experiment. Therefore it makes sense that the p-value obtained from the third experiment (0.921) is greater than the one (0.764) from the first experiment. Similar patterns among p-values are also observed between the fourth (0.747) and the second (0.449) experiments.

In the second simulation study, the type I error of the test was investigated with $\alpha=0.05$. 
\begin{table}[h]
\centering
\begin{tabular}{ |c|c|c|c|c| c|} 
 \hline
 ID & Sub-sample Sizes & $p_l$ & Err &$N_b$ &$ N_{rep}$\\
 \hline
 1 & \begin{tabular}{@{}l@{}@{}}  $K=3$\\$n_1=n_2=n_3=n_4=5$\\  $m_1=m_2=m_3=m_4=5$  \end{tabular} & $p_1=0.4$, $p_2=0.2$, and $p_3=0.3$ & 0.055 & $10^3$ & $10^3$  \\ 
 \hline
 
 2 & \begin{tabular}{@{}l@{}@{}}  $K=3$\\$n_1=n_2=n_3=n_4=10$\\  $m_1=m_2=m_3=m_4=10$  \end{tabular} & $p_1=0.4$, $p_2=0.2$, and $p_3=0.3$ & 0.052 & $10^3$ & $10^3$  \\ 
 \hline
 
  3 & \begin{tabular}{@{}l@{}@{}}  $K=3$\\$n_1=n_2=n_3=n_4=20$\\  $m_1=m_2=m_3=m_4=20$  \end{tabular} & $p_1=0.4$, $p_2=0.2$, and $p_3=0.3$ & 0.049 & $10^3$ & $10^3$  \\ 
 \hline
\end{tabular}
\caption{Simulation of type I error}\label{tab:err}
\end{table}
In order to simulate the type I error, the ground truth probability is set to be $p_1=0.4$, $p_2=0.2$, and $p_3=0.3$. For each of the three experiments in Table\ref{tab:err}, $N_{rep}=10^3$ many Freq Distribution Tables were generated with the table structure as in Table\ref{tab:fdt}. For each of the generated tables, the testing procedure as described in Table\ref{tab:step} was adopted. An Empirical Error rate (Err) was then calculated on the proportion of tests rejected with $\alpha=0.05$ from the $10^3$ tables. Based on the Err results of all three experiments, the empirical type I error of the test was consistent with what it was designed to be.

\section{Real Data Analysis and Conclusion}
Using the real data we compared the sub-sample pattern of observations from treatment PAC with the pattern in treatment PACGA for line "Mo298". 

Table\ref{exp2.sam} shows the sub-sample sizes for the PAC group, as well as for the PACGA group. The designed sub-sample size is 5 for all sub-samples. Because of missing values, not all sub-samples are 5. There are two sub samples in both PAC and PACGA that have all 5 missing (the sixth sub-sample and the eighth).

\begin{table}[h]
\centering
\begin{tabular}{ |c|c|c|c|c|c|c|c|c|  } 
 \hline
 PCA &5 &  5& 5  &3 &1 &0 &  2& 0 \\ 
 \hline
 PACGA&3 &  3& 4  &5 &1&0 &  1& 0   \\ 
 \hline
\end{tabular}
\caption{Freq table of real data PAC and PACGA}\label{exp2.sam}
\end{table}

Table\ref{exp2.cont.table} is based on the formulation in Table\ref{tab:fdt}. Because of the missing values in the entire sub samples, only four sets of comparisons were conducted, which results the following $8\times 2$ table.

\begin{table}[h]
\centering
\begin{tabular}{ |c|c|c|c|c|c|c|c|  } 
 \hline
  12.5 & 12.5& 6.5  &18.5 &13&2 &  0& 3   \\ 
 \hline
 5 &  4& 9  &3 &5 &15 &  2& 3 \\
 \hline
\end{tabular}
\caption{Freq table for real data}\label{exp2.cont.table}
\end{table}

Based on the computation steps mentioned in Table\ref{tab:step}, the test statistic is 31.598, and the $5\%$ bootstrapped critical value is 25.757 based on $N_b=10^3$. The computed p-value is 0.0194.

In summary, this work is motivated from a biological experiment of split-plot design, for the purpose of comparing the changing patterns in seed weight subject to increasing levels of stress when there were multiple populations to compare. We formalized the question into a nonparametric two sample comparison problem for changes among sub samples using U-statistics, with allowance for missing values.  The U-statistics were then used in a Chi-square type test statistic for hypothesis testing. Bootstrapped p-values were obtained. It was proven that the distribution of the simulated sample can be independent with the observed samples given certain summary statistics. Simulation results suggested that the test was consistent.   

\section*{Acknowledgments}
The motivating biology experiment in this project was partially supported by the National Research Initiative Competitive Grant no. 2009-35100-05028 from the USDA National Institute of Food and Agriculture.


\makeatletter   
 \renewcommand{\@seccntformat}[1]{APPENDIX~{\csname the#1\endcsname}.\hspace*{1em}}
 \makeatother

\makeatletter   
 \renewcommand{\@seccntformat}[1]{APPENDIX~{\csname the#1\endcsname}.\hspace*{1em}}
 \makeatother

\mbox{}\vspace*{1ex}
\mbox{}

\bibliographystyle{ECA_jasa}
\bibliography{ofiv4}

\end{document}